\newcommand{\arxivonly}[1]{\ifthenelse{\boolean{isarxiv}}{#1}{}}
\newcommand{\confonly}[1]{\ifthenelse{\boolean{isarxiv}}{}{#1}}
\definecolor{airforceblue}{rgb}{0.36,
0.54, 0.66}
\definecolor{amber}{rgb}{1.0, 0.75,
0.0}
\definecolor{amethyst}{rgb}{0.6, 0.4,
0.8}
\definecolor{applegreen}{rgb}{0.55,
0.71, 0.0}
\definecolor{antiquebrass}{rgb}{0.8,
0.58, 0.46}
\pgfplotsset{compat=newest}
\newtheorem{definition}{Definition}
\newtheorem{lemma}{Lemma}
\newacronym{IoT}{IoT}{Internet of Things}
\newacronym{PUPE}{PUPE}{per-user probability of error}
\newacronym{MAC}{MAC}{multiple access channel}
\newacronym{UMA}{UMA}{unsourced multiple-access}
\newacronym{TUMA}{TUMA}{type-based unsourced multiple-access}
\newacronym{TBMA}{TBMA}{type-based multiple access}
\newacronym{AWGN}{AWGN}{additive white Gaussian noise}
\newacronym{EP}{EP}{expectation propagation}
\newacronym{AMP}{AMP}{approximate message passing}
\newacronym{PMF}{PMF}{probability mass function}
\newacronym{PDF}{PDF}{probability density function}
\newacronym{CDF}{CDF}{cumulative distribution function}
\newacronym{SNR}{SNR}{signal-to-noise ratio}
\newacronym{GMAC}{GMAC}{Gaussian multiple access channel}
\newacronym{mMTCs}{mMTCs}{massive machine-type communications}
\newacronym{CCS-AMP}{CCS-AMP}{coded compressed sensing with approximate message passing}
\newcommand{\funceil}[1]{\left \lceil #1 \right \rceil}
\newcommand{\n}{{\uN}} 
\newcommand{\Ka}{{\uK_{\rm a}}} 
\newcommand{\Kap}{{\uK'_{\rm a}}} 
\newcommand{\Ma}{{\M_{\rm a}}} 
\newcommand{\power}{\uP} 
\newcommand{\M}{\uM} 
\newcommand{\TV}[2]{\ensuremath{{\opT\opV}\big(#1,#2\big)}} 
\newcommand{\hT}{\hat{\bm{T}}}
\newcommand{\hN}{\hat{\bm{N}}}
\newcommand{\EbNoi}{E_{\rm b}/N_0}
\newtheorem{theorem}{Theorem}
\newcommand{\zv}{\mathbf{0}}
\newcommand{\eqD}[2]{#1 \stackrel{\rm D}{=}#2}
\newcommand{\indic}[1]{\mathbbm{1}\left\{#1\right\}}
\newcommand{\Su}[1]{\mathtt{Supp}\lefto(#1\right)}
\newcommand{\rveC}{\bm{C}}
\newcommand{\rvecN}{\bm{N}}
\newcommand{\nonneg}{\mathbb{N}_0}
\newcommand{\indep}
{\mathrel{\bot}\joinrel\mathrel{\mkern-5mu}%
                    \joinrel\mathrel{\bot}}
\def\BibTeX{{\rm B\kern-.05em{\sc i\kern-.025em b}\kern-.08em
    T\kern-.1667em\lower.7ex\hbox{E}\kern-.125emX}}
\begin{document}
\title{An Achievability Bound for Type-Based Unsourced Multiple Access}


\author{%
\IEEEauthorblockN{Deekshith Pathayappilly Krishnan\IEEEauthorrefmark{1}, Kaan Okumus\IEEEauthorrefmark{1}, Khac-Hoang Ngo\IEEEauthorrefmark{2}, Giuseppe Durisi\IEEEauthorrefmark{1}}
\IEEEauthorblockA{\IEEEauthorrefmark{1}Department of Electrical Engineering,
	Chalmers University of Technology,
	41296 Gothenburg, Sweden\\
	Email: \{deepat, okumus, durisi\}@chalmers.se}
\IEEEauthorblockA{\IEEEauthorrefmark{2}Department of Electrical Engineering  (ISY),
Link\"{o}ping University, {58183
Link\"{o}ping}, Sweden\\
Email: khac-hoang.ngo@liu.se} 
\thanks{This work was supported in part by the Swedish Research Council under
grants 2021-04970 and 2022-04471, and by the Swedish Foundation for
Strategic Research. The work of K.-H. Ngo was supported in part by the Excellence Center at Linköping--Lund in Information Technology (ELLIIT).}
}

\maketitle


\begin{abstract}
	We derive an achievability bound to quantify the performance of a {}{type-based unsourced multiple access} system---an information-theoretic model for grant-free multiple access with correlated messages. The bound extends available achievability results for the per-user error probability in the unsourced multiple access framework, where, different from our setup, message collisions are treated as errors. {}{Specifically,} we provide an upper bound on the total variation distance between the type ({i.e.,} the empirical {}{probability mass function}) of {}{the} transmitted messages and its estimate over a Gaussian multiple access channel. {}{Through numerical {simulations,} we illustrate that our bound can be used to determine the message type that is less efficient to transmit, because more difficult to detect. We finally show that a practical scheme for type estimation{,} based on coded compressed sensing with approximate message passing, operates approximately {}{3 dB} away from the {bound}, for the parameters considered in the paper.}
\end{abstract}

\section{Introduction}
 {}{Next generation wireless communication systems should support accurate decision-making from distributed network data~\cite{wang2023road}}. Typically, decision-making {}{is performed} at a central node, to which data-collecting entities (e.g., \gls{IoT} sensors) transmit their information. {}{The central node is often interested in a function of the received data}. For instance, applications like over-the-air aggregation in wireless federated learning \cite{gafni2024federated}, point cloud transmission~\cite{bian2024wireless}, and {}{majority-vote} computation~\cite{csahin2024over} involve estimating the type, i.e., {}{ the empirical \gls{PMF} of the received data}. {In this paper, we establish} a performance bound for communication systems employing type-based decision-making. {}{Specifically, we generalize the information-theoretic analysis {of} the \gls{UMA} model presented in}~\cite{PolyanskiyISIT2017massive_random_access} {}{to the case of \gls{TUMA}.}
\par {{In} \gls{UMA}, all transmitters share a common codebook {as well as} the communication medium (e.g., an additive white Gaussian noise channel), and the receiver produces a list of the transmitted codewords. In~\cite{PolyanskiyISIT2017massive_random_access}, an achievability bound on the {}{per-user error probability} is {}{obtained} under the assumption that {the} active transmitters select their message uniformly at random. In the analysis, the event that two {}{or more} users select the same codeword (message collision) is modeled as an error. Indeed, the probability that this occurs is typically negligible for the parameters considered in~\cite{PolyanskiyISIT2017massive_random_access} (hundreds of active users and $2^{100}$ messages). On the contrary, in this paper, we are interested in the case in which multiple users transmit the same message and the receiver is {}{tasked with estimating the set of transmitted messages along with their multiplicities}, {}{i.e., the number of users sending each message}.}
\par {Type-based {multiple access} was originally introduced in~\cite{Mergen2006} in the context of parameter estimation from distributed data. The \gls{TUMA} framework has been recently introduced in~\cite{hoang2023type}, in the context of multi-target tracking. There{}{,} the authors analyze the performance of different compressed-sensing-inspired algorithms for type estimation. However, no performance bound is derived. {}{This work addresses such a gap.}}
{\paragraph*{Contributions} We derive a numerically computable upper bound on the total variation distance between the transmitted message type and the estimated {}{type} for the \gls{TUMA} model. Our analysis relies on some of the tools used in~\cite{PolyanskiyISIT2017massive_random_access}, {}{such as} Chernoff's bound and Gallager's $\rho$-trick. The novel challenge in the \gls{TUMA} setting arises from the presence of additional error events, such as the {}{incorrect} estimation of message multiplicities, which make the parametrization and partitioning of error events more complex. In contrast, in the \gls{UMA} case{}{,} the error events can be straightforwardly parametrized by the number of misdetected or impostor (i.e., {}{false-positive}) messages, enabling {}{a simple} partitioning of {}{the} error events and {}{a} subsequent application of Gallager’s $\rho$-trick. In the \gls{TUMA} setting, the message multiplicity errors make the application of Gallager’s $\rho$-trick more nuanced.}
\par {}{{}We use our bound to numerically characterize, for a selected family of \gls{TUMA} message types, {}{and} for a fixed number of active users, the minimum energy per bit ({}{$\EbNoi$}) required to achieve a target estimation error, as a function of a distance metric between the considered type and an \gls{UMA} uniform type.} We measure this distance in terms of total variation.  Our analysis reveal{}{s} that the required $\EbNoi$ is maximized for intermediate distance values and exceeds the $\EbNoi$ required in the UMA case. However, as the distance approaches one, the $\EbNoi$ reduces significantly, and drops below the {}{value required} for \gls{UMA}, {}{due to the reduced number of multiplicities to detect and the higher power allocated per transmitted message.} Finally, we present an adaptation to \gls{TUMA} of the coded compressed sensing with approximate message passing (CCS-AMP) scheme proposed in \cite{Amalladinne2020}. The gap between the $\EbNoi$ value predicted by our bound and the {}{value} required by the proposed {CCS-AMP} {}{scheme} is around 3 dB.
\par {}{\paragraph*{Notation} We denote scalar system parameters by uppercase non-italic
letters, scalar random variables by uppercase italic letters, deterministic scalars by lowercase italic letters, vectors by uppercase boldface italic  letters, and deterministic vectors by lowercase boldface italic letters,  {}{e.g., $\Ka$, $X$, $x$, $\rvecx$, and $\vecx$, respectively}. We {}{let} $\matidentity_\n$ be the~$\n \times \n$ identity matrix and $\zv$ be the {}{all}-zero vector.  We denote the $\smash{\n\text{-dimensional}}$ Euclidean space by~$\reals^\n$, the set of naturals as $\mathbb{N}$, $\mathbb{N}\cup\{0\}$ as $\nonneg$, and its $\n$-fold product as $\nonneg^\n$.
We set $[m:n]=\left\{m,\hdots,n\right\}$, $m\le n$, and $m$, $n\in \mathbb{N}$; if $m=1$, $[m:n]=[n]$. For $x\in \reals$, $\funceil{x}$  is its ceiling value and $x^+=\max\{0,x\}$.  For $\vecx \in \reals^\n$ and $\setS\subseteq [\n]$, $\vecx_{\setS}$ denotes {}{the restriction of the vector to $\setS$; its dimension is $|\setS|$}. {}{For a vector $\vecn$}, {}{$\Su{\vecn}$ is} the set of its non-zero entries. {}{We denote the Gaussian distribution with mean $\zv$ and covariance matrix $\matA$ by~$\normal(\zv, \matA)$.}
We use~$\vecnorm{\cdot}$ for {}{the}~$\ell_2$-norm and $\vecnorm{\cdot}_1$ for {}{the}~$\ell_1$-norm.
We use $\sim$ to specify the distribution {}{of a random variable}, $\indep$ for independence, and~$\stackrel{D}{=}$~for equality in distribution.  We denote the indicator function as $\indic{\cdot}$, the total variation distance  as $\TV{\cdot}{\cdot}$, and use natural logarithms. 
\arxivonly{\par {\emph{Reproducible Research:} The codes used to generate the plots and results in this paper are available at \url{https://github.com/deekshithpk/TUMA-Bound.git}.}}
\section{System Model}
We consider a scenario wherein a large number of transmitters (users), of which only $\Ka$ are \textit{active},  communicate {}{with} a common receiver, over {}{$\n$ uses of a real-valued} {\gls{GMAC}}. We assume that $\Ka$ is  fixed and known to the receiver. The transmitters communicate over $\n$ channel uses. {}{Such devices could be, for instance, sensors deployed to track the states of certain targets}. The total number of states could  potentially be large and each state corresponds to a {}{message} chosen by a transmitter. We denote the total number of messages by $\M$. However, we assume that only $\Ma$ out of $\M$ messages are {}{active,} and the {}{active message set} is denoted as $\mathcal{M}_{\rm a}$. In {}{a} {}{target-tracking} scenario, the active messages could correspond to {}{target} locations within a small region {}{near} the sensors. We assume that each user chooses a message from $\mathcal{M}_{\rm a}$. This results in a type over $[\Ma]$, {}{to be defined shortly.} We assume  $\Ma \ll \M$ and $\Ma \le \Ka$, e.g., $\Ma=10$, $\Ka=10^2$, and $\M=10^4$. {{}{Furthermore, we assume {}{that} $\Ma$ {}{is} known but $\mathcal{M}_a$ {}{is} unknown to the receiver.}
\par {}{The channel output is given by}
\begin{IEEEeqnarray}{c}
	\rvecy=\sum_{ k=1}^{\Ka}\vecx_{ k}+\rvecz
	\label{Eq:Channel_Model}
\end{IEEEeqnarray}
where $\rvecz\sim \normal(\zv,\matI_\n)$ is the \n-dimensional noise vector,  independent of {}{the} channel input vector $\vecx_k$ of user $k$, {}{$\smash{k \in[\Ka]}$,} and $\rvecy$ is the channel output. We assume that the channel inputs satisfy a maximum power constraint{
		\begin{IEEEeqnarray}{c}
			\vecnorm{\vecx_{k}}^2 \le \n\power,~k\in[\Ka].
			\label{Eq:Max_pow_constraint}
		\end{IEEEeqnarray}}

\par In our setting, multiple users may {}{transmit} the same message to the receiver. For instance, in the multi-target tracking problem, multiple sensors {}{might} report the same location {}{of} a tracked target. We utilize an {}{$\Ma$-sparse} vector in {}{$\nonneg^\M$}  to describe the list of  {}{transmitted/decoded messages and their multiplicities. We refer to this vector as a multiplicity vector.} Specifically, let $\vecn=\tp{[n_1,\hdots,n_\M]}$ be the multiplicity vector corresponding to {}{the} messages transmitted by the users. If we denote by $W_k$  the message chosen by {}{user~$k$}, we have that} 
\begin{IEEEeqnarray}{rCl}
	n_m &=& \sum_{ k=1}^{\Ka}\mathbbm{1}\left\{W_{ k}= m\right\}.
\end{IEEEeqnarray}
We have that $\vecnorm{\vecn}_1=\Ka$; furthermore, its support 
\begin{IEEEeqnarray}{c}
    \textstyle{\Su{\vecn} = \left\{ m \in [\M]: n_m>0\right\}}
\end{IEEEeqnarray}
{has cardinality} $\Ma$.
For the \gls{UMA} case~\cite{PolyanskiyISIT2017massive_random_access}, $n_m=1$ for all transmitted messages,  and $\vecnorm{\vecn}_1=|\Su{\vecn}|$.
\par The decoder observes $\rvecy$ and obtains an estimate $\smash{\hN=\tp{[\hat{N}_1,\hdots,\hat{N}_\M]}}$ of $\vecn$. We assume {}{that} the decoder's estimate is subject to a constraint on the {}{average total variation distance}. Specifically, {}{given} the types $\smash{\vect=\vecn/\Ka}$ and $\smash{\hT=\hN/\Ka}$,  the decoder's {}{estimate} {}{must satisfy} $\Exop\big[\TV{\vect}{\hT}\big]\le \epsilon$, {}{where the expectation is taken over the distributions of the channel input and the noise.} Here,
\begin{IEEEeqnarray}{c}
	\TV{\vect}{\hT}=\frac{1}{2}\sum_{m=1}^{\M}\vert t_m-\hat{T}_m\vert
\end{IEEEeqnarray}
is the total variation distance, which coincides with the PUPE metric in \cite{PolyanskiyISIT2017massive_random_access} when $\Ma = \Ka$; a more general metric is considered in \cite{hoang2023type}. {We shall refer to the setting just introduced as {}{the} \gls{TUMA} setting. Now, we provide a formal definition of a code for this setting. {}{To this end},  it {}{is} convenient to define the set of multiplicity vectors on $[\M]$ as}  \[\mathfrak{P}=\left\{\vecn \in \nonneg^\M:|\Su{\vecn}|=\Ma,~\vecnorm{\vecn}_1=\Ka\right\}.\]
\begin{definition}[\textit{TUMA Code}]
 An $( \M,\n,\power,\epsilon)$ \gls{TUMA} {}{code} for a {}{GMAC} with $\Ka$ active users and $\Ma$ active messages resulting in the message multiplicity vector $\vecn$, consists of an encoder $\xi:[\M] \mapsto \mathbb{R}^\n$ that produces a channel input satisfying \eqref{Eq:Max_pow_constraint}  and a decoder  $\psi:\reals^\n \mapsto \mathfrak{P}$ that {}{produces a multiplicity vector based on} the output $\rvecy$ of the \gls{GMAC}, {}{and satisfies} the constraint $\Exop[\TV{\vecn/\Ka}{\psi(\rvecy)/\Ka}] \le \epsilon$.
\end{definition}
\section{A {}{Random-Coding} Bound for TUMA}
\par {}{Our main result is a random-coding achievability bound stated in the next theorem.}
\begin{theorem}[\emph{Random-Coding Bound}] For {}{a} \gls{GMAC} with $\Ka$ active users and $\Ma$ active messages, and a fixed message multiplicity vector  $\vecn$ with $|\Su{\vecn}|=\Ma$ and $\vecnorm{\vecn}_1=\Ka$, there exists a $\lefto(\M,\n,\power,\epsilon\right)$ TUMA code {}{for which}
	\begin{IEEEeqnarray}{rCl}
		\epsilon \le \sum_{ t=1}^{\Ka}\frac{ t}{\Ka}\tilde{p}_{ t}+p_0+p_1.
		\label{Eq:Final_bound}
	\end{IEEEeqnarray}
    Here,
	\begin{IEEEeqnarray}{rCl}
		p_0&=&\Ma \Prob\lefto[ \vecnorm{  \rveC_{1}}^2 >\n\power\right] \label{Eq:Def_p0}\label{Eq:Def_p_0}
        \end{IEEEeqnarray}
        with $\rveC_1\sim \normal\lefto(\zv,\power'\matI_\n\right)$, and $0<\power'<\power$,
        \begin{IEEEeqnarray}{rCl}
          p_1&=& e^{-\frac{\n\delta^2}{8}}\label{Eq:Def_p_1}  
        \end{IEEEeqnarray}
	    with
        \begin{IEEEeqnarray}{rCl}
        \delta &\in& \lefto(0,\frac{1}{1-2\delta_{\min}^*}-1\right)
		\label{Eq:delta_interval}
        \end{IEEEeqnarray}
        and
        \begin{IEEEeqnarray}{c}
            \delta_{\min}^*= \frac{1}{2}-\min_{t,\setS,\ell,i,j,\rho}\lambda^*
		\label{Eq:Def_delta_start_min}
        \end{IEEEeqnarray}
        where $\lambda^*\in \lefto(0,{1}/{2}\right)$ is defined in \eqref{Eq:lambda_star}. Finally,
        \begin{IEEEeqnarray}{rCl}
		\tilde{p}_t &=& \sum_{\ell=0}^{\Ma}\sum_{\setS \in \mathfrak{S}_\ell}\sum_{i=0}^{\Ma-\ell}\sum_{\setN \in \mathfrak{N}_i}\sum_{j=\ell}^{t-\Ma+\ell+i}e^{\rho \n R-\n E_0}.\label{Eq:Def_EN} \label{Eq:Def_tilde_p}
        \end{IEEEeqnarray}
        In both \eqref{Eq:Def_delta_start_min} and \eqref{Eq:Def_EN}, $\smash{\rho\in [\rho_{\min},1]}$ for a given $\smash{\rho_{\min}\in(0,1)}$, and $t$,~$\setS$,~$\ell$,~$i$,~$j$~in~\eqref{Eq:Def_delta_start_min} are optimized over the range of values given in~\eqref{Eq:Final_bound} and~\eqref{Eq:Def_tilde_p},
        \begin{IEEEeqnarray}{rCl}
          \mathfrak{S}_\ell&=&\{\setS\subseteq [\Ma]:|\setS|=\ell,\vecnorm{\vecn_{\setS}}_1 \le t\}\label{Eq:Def_MisSet}\\
          \mathfrak{N}_i&=&\{\setN\subseteq [\Ma]\setminus\setS:|\setN|=i, \vecnorm{\vecn_{\setN}}_1 \ge t+i-\vecnorm{\vecn_{\setS}}_1\}.\nonumber\\
          &&\label{Eq:Def_DetDeflSet}
        \end{IEEEeqnarray}
        Furthermore, 
        \begin{IEEEeqnarray}{rCl}
		R&=&\frac{1}{\n}\log \M'+\frac{1}{\n}\log \frac{\M^\ell}{\ell!} \label{Eq:Def_R}\\
		E_0&=&\frac{1}{2}\log\left(1-2b\rho\right)+ \rho a \label{Eq:E0}\label{Eq:Def_E0}\\
		\label{Eq:M_prime} \M'&=&{j-1 \choose \ell-1}{(t-j-1)^+ \choose {(\Ma-\ell-i-1)^+ }}\min\lefto\{\M_0^+,\M^+\right\}\\
		\label{Eq:M_0_plus}\M_0^+&=&\binom{t-\vecnorm{\vecn_{{\setS}}}_1+(i-1)^+}{{(i-1)^+}}\\
\label{Eq:M_plus}\M^+&=&\binom{(\vecnorm{\vecn_{{\setS}}}_1+\vecnorm{\vecn_{{\setN}}}_1-t-1)^+}{(i-1)^+} \\
		a&=&\frac{1}{2}\log\lefto(1+2\power'{c}_{\min} \lambda^*\right) \label{Eq:Def_a}\\
		b&=&\lambda^* \lefto( 1-\frac{1}{1+2{c_{\min}}\power'\lambda^*}\right)\label{Eq:Def_b}\\
		\lambda ^*&=&\frac{(c_{\min}\power'-2)+\sqrt{(c_{\min}\power'-2)^2+4c_{\min}\power'\lefto(1+\rho\right)}}{4c_{\min}\power'\lefto(1+\rho\right)}\nonumber\\
		&& \label{Eq:lambda_star} \\
c_{\min}&=&\mathbbm{1}\{\ell>0\}\funceil{\frac{\vecnorm{\vecn_{\setS}}_1^2}{\ell}}+\mathbbm{1}\{i>0\}\funceil{\frac{\lefto(t-\vecnorm{\vecn_{\setS}}_1\right)^2}{i}}\nonumber\\
    &&+\mathbbm{1}\{t>j\}\funceil{\frac{\lefto(t-j\right)^2}{\Ma-\ell-i}}+\mathbbm{1}\{\ell>0\}\funceil{\frac{j^2}{\ell}}.\label{Eq:c_min_defn}
	\end{IEEEeqnarray}
	\label{Th:Main}
\end{theorem}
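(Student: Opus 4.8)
The plan is to prove \eqref{Eq:Final_bound} by a random-coding argument in the style of \cite{PolyanskiyISIT2017massive_random_access}, analyzing a nearest-signal (maximum-likelihood) decoder averaged over an i.i.d.\ Gaussian ensemble, and then invoking the standard argument that the ensemble average is achieved by at least one deterministic code. I would fix the transmitted multiplicity vector $\vecn$, relabel so that $\Su{\vecn}=[\Ma]$, and draw codewords $\rveC_1,\dots,\rveC_\M$ i.i.d.\ from $\normal(\zv,\power'\matI_\n)$ with $\power'<\power$; user $k$ transmits $\rveC_{W_k}$, so that $\rvecy=\sum_m n_m\rveC_m+\rvecz$, and the decoder is $\psi(\rvecy)=\arg\min_{\hN\in\mathfrak{P}}\vecnorm{\rvecy-\sum_m \hat N_m\rveC_m}^2$. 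Since each active codeword is Gaussian with per-symbol power $\power'<\power$, a union bound over the $\Ma$ active messages shows the power constraint \eqref{Eq:Max_pow_constraint} is violated with probability at most $p_0$ in \eqref{Eq:Def_p0}; similarly, an atypically large noise energy has probability at most $p_1$ in \eqref{Eq:Def_p_1}. On either of these two events I would bound $\TV{\vect}{\hT}$ by its maximum value $1$, which accounts for the two additive terms, leaving only the \emph{typical} contribution to analyze.

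The key structural observation is that, because $\vecn$ and $\hN$ lie in $\mathfrak{P}$ (nonnegative, integer-valued, common $\ell_1$-norm $\Ka$) so that $\sum_m(n_m-\hat N_m)=0$, one has $\TV{\vect}{\hT}=\tfrac{1}{\Ka}\sum_m(n_m-\hat N_m)^+$, i.e.\ $\Ka^{-1}$ times the total \emph{under-detected mass}, a nonnegative integer. Summing over its value $t$ gives $\Exop[\TV{\vect}{\hT}]=\tfrac{1}{\Ka}\sum_{t=1}^{\Ka}t\,\Prob[\sum_m(n_m-\hat N_m)^+=t]$, so (on the typical event) it remains to bound the probability that the under-detected mass equals $t$ by $\tilde p_t$. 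This already explains the weights $t/\Ka$ in \eqref{Eq:Final_bound}.

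The heart of the proof, and the step I expect to be hardest, is to parametrize and count the error patterns of a given mass $t$. In the \gls{UMA} case an error is described by a pair of equal-size sets (misdetected and impostor codewords); here multiplicities break this symmetry, since a true message can be \emph{partially} misdetected and an impostor can carry multiplicity exceeding one. I would therefore introduce the configuration $(\ell,\setS,i,\setN,j)$: a set $\setS$ of $\ell$ fully-missed active messages, a set $\setN$ of $i$ partially-missed active messages, the $\ell$ impostors that keep the support size at $\Ma$, and their total multiplicity $j$. Encoding ``mass $=t$'' and $\vecnorm{\hN}_1=\Ka$ as constraints reproduces exactly $\mathfrak{S}_\ell$ and $\mathfrak{N}_i$ in \eqref{Eq:Def_MisSet}--\eqref{Eq:Def_DetDeflSet} (e.g.\ each partially-missed entry loses at most $n_m-1$, giving $\vecnorm{\vecn_{\setN}}_1\ge t+i-\vecnorm{\vecn_{\setS}}_1$) and the range $j\in[\ell,t-\Ma+\ell+i]$. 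Counting the candidate vectors $\hN$ realizing a configuration---$\M^\ell/\ell!$ impostor locations together with compositions of the various multiplicities, $\binom{j-1}{\ell-1}$, $\binom{(t-j-1)^+}{(\Ma-\ell-i-1)^+}$, and $\min\{\M_0^+,\M^+\}$---yields $\M'$ in \eqref{Eq:M_prime} and the rate $R$ in \eqref{Eq:Def_R}.

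Finally I would bound, for one configuration, the probability that some such $\hN$ has smaller decoding metric than $\vecn$. For a single candidate the event is $\vecnorm{\bm{D}+\rvecz}^2\le\vecnorm{\rvecz}^2$ with signal difference $\bm{D}=\sum_m(n_m-\hat N_m)\rveC_m\sim\normal(\zv,c\power'\matI_\n)$ and $c=\sum_m(n_m-\hat N_m)^2$; a Chernoff bound in a parameter $\lambda$, followed by Gaussian integration over $\bm{D}$ and $\rvecz$, bounds this pairwise probability. I would then apply Gallager's $\rho$-trick, $\rho\in[\rho_{\min},1]$, to the union over the $\M'\cdot\M^\ell/\ell!$ candidates of a configuration and optimize $\lambda$ to obtain $\lambda^*$ in \eqref{Eq:lambda_star} and the exponent $E_0$ in \eqref{Eq:E0}, giving the summand $e^{\rho\n R-\n E_0}$. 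The crucial subtlety is which error dimensions may be \emph{lumped} under one exponent and which must be \emph{summed} explicitly: candidates differing only in impostor locations or in the detailed multiplicity distribution share the same energy lower bound, so they enter through $R$ and the $\rho$-trick, whereas $t,\ell,\setS,i,\setN,j$ change the worst-case energy and hence appear as the explicit sums in \eqref{Eq:Def_tilde_p}. That worst-case energy is $c_{\min}$ in \eqref{Eq:c_min_defn}: minimizing $\sum_m(n_m-\hat N_m)^2$ subject to the fixed partial sums of a configuration is achieved, by convexity, by the even split, producing the four ceiling terms. The parameter $\delta$ and $p_1$ are then chosen so that these Chernoff exponents stay valid uniformly over all configurations, which is the origin of the dependence of the admissible range of $\delta$ on $\delta_{\min}^*$ through the smallest $\lambda^*$. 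Summing the per-configuration bounds gives $\tilde p_t$, and combining with $p_0$ and $p_1$ yields \eqref{Eq:Final_bound}; existence of a deterministic code follows since the ensemble average satisfies the bound.
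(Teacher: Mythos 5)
Your proposal is correct and follows essentially the same route as the paper's proof: an i.i.d.\ Gaussian ensemble with power backoff $\power'<\power$ and minimum-distance decoding, the $p_0$/$p_1$ reductions, a decomposition of the total variation by the integer mass $t$ (your under-detected-mass identity is equivalent to the paper's Lemma~\ref{Lem:Delta} on the evenness of $\vecnorm{\vecn-\hat{\vecn}}_1$), the configuration parametrization $(\ell,\setS,i,\setN,j)$ with the counting bound $\M'\cdot\M^{\ell}/\ell!$, and a Chernoff bound plus a single application of Gallager's $\rho$-trick per configuration with the worst-case energy $c_{\min}$ and the uniformity condition on $\delta$ via $\delta_{\min}^*$. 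The only cosmetic differences are that you justify $c_{\min}$ by convexity of the even split rather than the paper's $\vecnorm{\cdot}_1\le\sqrt{n}\vecnorm{\cdot}_2$ inequality, which yields the same ceiling terms.
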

\par {}{The proof of the theorem has a structure similar to the proof of\mbox{\cite[Theorem 1]{PolyanskiyISIT2017massive_random_access}}, in which an \gls{UMA} scenario is considered. Specifically, it relies {}{on} random coding, Chernoff{}{'s} bound, and the use of Gallager's $\rho$-trick\mbox{\cite[Eq. (2.28)]{Gallager1968information}}. The use of Gallager's $\rho$-trick is more delicate in the \gls{TUMA} case. {}{Indeed, in the UMA setting}, the error events, i.e., the {misdetection} of some messages and the inclusion of impostor messages{}{,} are such that the number of misdetected messages coincides with {}{the number of} impostor messages. This enables a convenient partitioning of the error event space{}{,} {}{achieved} by parametrizing the error events through the number of misdetected (or impostor) messages. This, {}{in turn}, facilitates the application of Gallager's $\rho$-trick.
\par {}{In \gls{TUMA}, however, the space of error events {}{is} much {}{larger}. Indeed{}{,} a \gls{TUMA} decoder can incorrectly estimate the multiplicities of correctly decoded messages, either inflating or deflating them. As a consequence, not even the sum of multiplicities of the misdetected messages coincide{}{s} with that of the impostor messages. This necessitates a more nuanced parametrization of {}{the} error events, {}which needs to account for both the number and the position of the misdetected messages, impostor messages, and inflated and deflated detected messages to effectively partition the error event space. {Using this novel parametrization, we upper-bound the probability of error events in each partition by analyzing {}{its} {}{Chernoff's} exponent, conditioned on a high-probability set. This leads to the term $p_1$ in~\eqref{Eq:Def_p_1}. Furthermore, to simplify the analysis, we apply Gallager's $\rho$-trick only once, {unlike in}~\cite{PolyanskiyISIT2017massive_random_access}, where it is {}{applied} twice.}
\par {}{One final challenge is that the chosen performance metric, i.e., the total variation distance, is a nonlinear function of the multiplicity vectors. This nonlinearity complicates the estimation of the total number of error events contributing to the union bound summation in Gallager's $\rho$-trick. While in {}{the} UMA {}{case} this number can be estimated {}{using} a straightforward counting argument, our estimation involves linearizing the total variation distance and leveraging expressions for integer-valued solutions to partition equations {}{(see Lemma} \ref{Lem:Pi_Cardinality}).}
\par\confonly{Now, we present a sketch of the proof of Theorem \ref{Th:Main}. The full proof can be found in \cite{TUMA2025}.}\arxivonly{Now, we present the proof of Theorem \ref{Th:Main}.}
\begin{proof}[\confonly{Proof Sketch of Theorem \ref{Th:Main}}\arxivonly{Proof of Theorem \ref{Th:Main}}]
	\arxivonly{We present the proof in multiple steps.} First, we state some supporting lemmas. \confonly{The proof of these lemmas can be found in the extended version~\cite{TUMA2025}.}\arxivonly{The proof of these lemmas are delegated to the appendices.}
	\arxivonly{\par In Lemma \ref{Lem:Delta} below, we prove that the $\ell_1$~distance between the transmitted multiplicity vector and the decoded multiplicity vector is even and does not exceed~{$2\Ka$}.
    \begin{lemma}
			\label{Lem:Delta}
			Fix any two multiplicity vectors $\vecn$ and $\hat{\vecn}$ such that $\vecnorm{\vecn}_1=\vecnorm{\hat{\vecn}}_1=\Ka$, $\vecn \neq \hat{\vecn}$, $\smash{\Su{\vecn}\subset[\M]}$, $\smash{\Su{\hat{\vecn}}\subset[\M]}$, $|\Su{\vecn}|=|\Su{\hat{\vecn}}|=\Ma$ and $\smash{\Ka \ge \Ma}$. Then, $\vecnorm{\vecn-\hat{\vecn}}_1=2t$, for some $t\in[\Ka]$.
		\end{lemma}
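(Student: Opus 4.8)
The plan is to prove the two claims separately: first that $\vecnorm{\vecn-\hat{\vecn}}_1$ is even, and then that it is bounded by $2\Ka$, so that writing it as $2t$ forces $t\in[\Ka]$. Both claims follow from exploiting the two shared structural constraints, namely $\vecnorm{\vecn}_1=\vecnorm{\hat{\vecn}}_1=\Ka$ and that the entries are nonnegative integers (since $\vecn,\hat{\vecn}\in\nonneg^\M$).

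For the parity claim, first I would split the index set $[\M]$ according to the sign of $d_m=n_m-\hat{n}_m$. Define $\mathcal{P}=\{m:d_m>0\}$ and $\mathcal{Q}=\{m:d_m<0\}$, so that $\vecnorm{\vecn-\hat{\vecn}}_1=\sum_{m\in\mathcal{P}}d_m+\sum_{m\in\mathcal{Q}}(-d_m)$. The key observation is that the constraint $\vecnorm{\vecn}_1=\vecnorm{\hat{\vecn}}_1$ means $\sum_{m=1}^{\M}d_m=0$, hence $\sum_{m\in\mathcal{P}}d_m=\sum_{m\in\mathcal{Q}}(-d_m)$. Therefore $\vecnorm{\vecn-\hat{\vecn}}_1=2\sum_{m\in\mathcal{P}}d_m$, which is twice an integer (the entries are integers, so each $d_m$ is an integer). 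This immediately gives evenness, and we set $t=\sum_{m\in\mathcal{P}}d_m$. Since $\vecn\neq\hat{\vecn}$, we have $t\ge 1$.

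For the upper bound, I would bound each of the two equal halves separately by $\Ka$. Using the nonnegativity of the entries, $\sum_{m\in\mathcal{P}}d_m=\sum_{m\in\mathcal{P}}(n_m-\hat{n}_m)\le\sum_{m\in\mathcal{P}}n_m\le\sum_{m=1}^{\M}n_m=\vecnorm{\vecn}_1=\Ka$. Thus $t\le\Ka$, and combined with $t\ge 1$ we conclude $t\in[\Ka]$, completing the proof. The cardinality hypotheses $|\Su{\vecn}|=|\Su{\hat{\vecn}}|=\Ma$ and the condition $\Ka\ge\Ma$ are not strictly needed for this particular statement and serve only to fix the ambient setting; the argument rests solely on the equal $\ell_1$-norms and integer nonnegativity.

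I do not anticipate a genuine obstacle here, as the result is elementary once the difference vector is decomposed by sign. The only point requiring slight care is making explicit that $\sum_m d_m=0$ forces the positive and negative parts of the $\ell_1$-norm to be equal, which is what yields both the factor of two and the clean bound; everything else is a direct consequence of the integrality and nonnegativity of the multiplicity vectors.
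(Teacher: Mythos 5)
Your proof is correct and follows essentially the same route as the paper's: decompose the difference vector by sign, use $\vecnorm{\vecn}_1=\vecnorm{\hat{\vecn}}_1$ to equate the positive and negative parts (giving evenness), and use nonnegativity of the entries to bound each half by $\Ka$. The only cosmetic difference is that the paper obtains the upper bound by appealing to $\TV{\vect}{\hT}\in[0,1]$ (and additionally exhibits vectors attaining the extremes $2$ and $2\Ka$, which the statement does not require), whereas you bound $\sum_{m\in\mathcal{P}}(n_m-\hat{n}_m)\le\Ka$ directly; both rest on the same fact.
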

		\begin{proof}
			See Appendix \ref{Append:Scaled_TV_Even}.
		\end{proof}
		}
	\par {}{In Lemma \ref{Lem:Opt_lambda} below, we solve an optimization problem that will turn out {}{to be} useful in analyzing the {Chernoff's} exponent.}
	\begin{lemma}
		Let $\rho\in(0,1]$, $\power'>0$, and {$c>0$}. Define
		\begin{IEEEeqnarray}{rCl}
			E_0'\lefto(\lambda\right)&=&\frac{\rho}{2}\log\big(1+2\power'c\lambda\big)\nonumber\\
			&&+\frac{1}{2}\log\lefto(1-2\rho\lambda\lefto(1-\frac{1}{1+2\power'c\lambda}\right)\right)
			\label{Eq:Def_E0_prime}
		\end{IEEEeqnarray}
		{}{ where $\lambda>0$ } satisfies $1-2\rho\lambda\lefto(1-{1}/{1+2\power'c\lambda}\right)>0.$ Then, $E_0'$ is maximized at $\lambda^*$, and $\lambda^* <1/2$.
		\label{Lem:Opt_lambda}
	\end{lemma}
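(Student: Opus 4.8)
The plan is to locate the unique stationary point of $E_0'$ on its feasible domain and to show it is simultaneously the maximizer and bounded above by $1/2$. First I would simplify \eqref{Eq:Def_E0_prime}: setting $A=2\power'c>0$ and using $1-\frac{1}{1+A\lambda}=\frac{A\lambda}{1+A\lambda}$, the argument of the second logarithm becomes $\frac{h(\lambda)}{1+A\lambda}$ with $h(\lambda):=1+A\lambda-2\rho A\lambda^{2}$, so that $E_0'(\lambda)=\frac{\rho-1}{2}\log(1+A\lambda)+\frac12\log h(\lambda)$. The feasibility constraint in the lemma is precisely $h(\lambda)>0$; since $h(0)=1$ and $h$ is a downward parabola, the feasible set is an interval $(0,\lambda_0)$.

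Next I would differentiate and clear the two positive denominators $1+A\lambda$ and $h(\lambda)$. A short computation shows that the numerator of $\frac{d}{d\lambda}E_0'$ factors as $\rho\,g(\lambda)$ with $g(\lambda):=1+(A-4)\lambda-2A(1+\rho)\lambda^{2}$, so on $(0,\lambda_0)$ the sign of the derivative equals the sign of $g$. The product of the roots of $g$ is $-\frac{1}{2A(1+\rho)}<0$, so $g$ has exactly one positive root; solving the quadratic and resubstituting $A=2\power'c$ identifies this root as the $\lambda^{*}$ of \eqref{Eq:lambda_star}. Because $g$ is a downward parabola with $g(0)=1>0$, the derivative of $E_0'$ is positive on $(0,\lambda^{*})$ and negative on $(\lambda^{*},\lambda_0)$, so $\lambda^{*}$ is the unique maximizer.

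Before concluding maximality I would confirm that $\lambda^{*}$ actually lies in the feasible interval, i.e.\ $\lambda^{*}<\lambda_0$. This follows from the identity $h(\lambda)-g(\lambda)=4\lambda+2A\lambda^{2}>0$ for all $\lambda>0$: evaluating at $\lambda^{*}$ gives $h(\lambda^{*})>g(\lambda^{*})=0$, and since $h$ is positive only on $(0,\lambda_0)$, this forces $\lambda^{*}<\lambda_0$, so the interior stationary point is genuinely attained.

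Finally, to prove $\lambda^{*}<1/2$ I would set $x=c\power'>0$ and $\beta=1+\rho\in(1,2]$ and reduce the claim to $(x-2)+\sqrt{(x-2)^{2}+4x\beta}<2x\beta$. Isolating the radical yields $\sqrt{(x-2)^{2}+4x\beta}<x(2\beta-1)+2$, whose right-hand side is positive because $2\beta-1>0$; squaring and cancelling then collapse the inequality to $0<4\beta\left[x(\beta-1)+1\right]$, which holds since $x>0$ and $\beta>1$. The main obstacle here is not the calculus but the bookkeeping around the feasibility region: one must select the positive root of $g$, verify it lies strictly inside $(0,\lambda_0)$ rather than beyond the boundary where $E_0'$ is undefined, and confirm through the sign of $g$ that it is a maximum rather than merely a critical point.
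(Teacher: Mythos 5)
Your proof is correct, and its core is the same as the paper's: differentiate $E_0'$, clear the positive denominators, and reduce to the quadratic $4c\power'(1+\rho)\lambda^2+2(2-c\power')\lambda-1=0$, whose unique positive root is the $\lambda^*$ in \eqref{Eq:lambda_star}. The differences lie in the supporting steps, and they work in your favor. Your sign analysis of $g(\lambda)=1+(A-4)\lambda-2A(1+\rho)\lambda^2$ (downward parabola, $g(0)=1>0$, unique positive root) shows that the derivative of $E_0'$ is positive on $(0,\lambda^*)$ and negative on $(\lambda^*,\lambda_0)$, so $\lambda^*$ is the unique \emph{global} maximizer on the feasible interval; the paper instead checks a derivative condition at $\lambda^*$ (stated, apparently with a typo, as the \emph{first} derivative being negative there, where a second-derivative test must be meant), which on its own certifies only a local maximum. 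Your verification that $\lambda^*$ actually lies inside the feasible region, via $h(\lambda)-g(\lambda)=4\lambda+2A\lambda^2>0$ so that $h(\lambda^*)>0$, is a step the paper omits altogether. Finally, for $\lambda^*<1/2$ your isolate-and-square argument is valid but longer than necessary: since $\rho\le 1$, one has $\sqrt{(c\power'-2)^2+4c\power'(1+\rho)}\le\sqrt{(c\power'-2)^2+8c\power'}=c\power'+2$, which immediately yields the paper's sharper bound $\lambda^*\le\frac{1}{2(1+\rho)}<\frac{1}{2}$.
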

	\arxivonly{\begin{proof}
			See Appendix \ref{Sec:Opt_Chernoff}.
		\end{proof}}
	{}{In Lemma \ref{Lem:Exp_f_c} below{}{,} we prove that a certain function is monotonically decreasing. We will use this result to upper{}{-}bound an exponential term arising from Chernoff's bound.}
	\begin{lemma}
		For $\n \in \mathbb{Z}_+$ and $\delta\in (0,1)$, let $\vecz\in\reals^{\n}$ be such that $\vecnorm{\vecz}^2 \le \n(1+\delta)$. {}{Furthermore, let} $\rho\in(0,1]$, $\power'>0$, and $\lambda^*$ as in \eqref{Eq:lambda_star}. Define
		\begin{IEEEeqnarray}{rCl}
			f(c)&=&\lambda^*\lefto(\vecnorm{\vecz}^2-\frac{\vecnorm{\vecz}^2}{1+2c \power'\lambda^*}\right)-\frac{\n}{2}\log \lefto(1+2c\power'\lambda^*\right).\nonumber\\
			&&\label{Eq:df_dc}
		\end{IEEEeqnarray}
		Then, for $c>0$, $f(c)$ is monotonically decreasing in $c$.
		\label{Lem:Exp_f_c}
	\end{lemma}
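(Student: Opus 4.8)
The plan is to show directly that $f$ has a strictly negative derivative on $(0,\infty)$, and then to trace the single inequality on which everything hinges. First I would make the $c$-dependence transparent by setting $u(c)=1+2c\power'\lambda^*$, which is positive, strictly increasing, satisfies $u'(c)=2\power'\lambda^*>0$, and obeys $u(c)>1$ for every $c>0$. Rewriting $f(c)=\lambda^*\vecnorm{\vecz}^2\big(1-1/u(c)\big)-\tfrac{\n}{2}\log u(c)$, a routine differentiation and collection of terms yields
\begin{IEEEeqnarray}{c}
f'(c)=\frac{u'(c)}{u(c)}\left(\frac{\lambda^*\vecnorm{\vecz}^2}{u(c)}-\frac{\n}{2}\right).
\end{IEEEeqnarray}
Because $u'(c)/u(c)>0$, the sign of $f'(c)$ is exactly the sign of the bracketed factor, so the problem reduces to showing $\lambda^*\vecnorm{\vecz}^2/u(c)\le \n/2$ for all $c>0$.

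The next observation is that $c\mapsto \lambda^*\vecnorm{\vecz}^2/u(c)$ is itself decreasing, since its denominator grows with $c$; hence its supremum over $c>0$ is approached as $c\to0^+$, where $u(c)\to1$. Consequently it suffices to establish the single, $c$-free inequality $\lambda^*\vecnorm{\vecz}^2\le \n/2$. Invoking the hypothesis $\vecnorm{\vecz}^2\le \n(1+\delta)$, this in turn reduces to $\lambda^*(1+\delta)\le 1/2$.

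To close, I would appeal to the admissible range of $\delta$ rather than to the bound $\lambda^*<1/2$ from Lemma \ref{Lem:Opt_lambda} alone. Indeed, by the definition of $\delta_{\min}^*$ in \eqref{Eq:Def_delta_start_min} and the interval \eqref{Eq:delta_interval}, $\delta$ is restricted precisely so that $2\lambda^*(1+\delta)\le 1$. Substituting back, for every $c>0$ we obtain $\lambda^*\vecnorm{\vecz}^2/u(c)<\lambda^*\vecnorm{\vecz}^2\le \lambda^*\,\n(1+\delta)\le \n/2$, so the bracketed factor is strictly negative and therefore $f'(c)<0$; thus $f$ is strictly decreasing on $(0,\infty)$.

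The main obstacle is exactly this last inequality $\lambda^*\vecnorm{\vecz}^2\le \n/2$: the local derivative computation is mechanical, but monotonicity is \emph{not} a consequence of $\lambda^*<1/2$ by itself. Since $f(0)=0$, if one only had $\lambda^*<1/2$ and $\delta\in(0,1)$ arbitrary then $\lambda^*\vecnorm{\vecz}^2$ could exceed $\n/2$, making $f'(0^+)>0$ and forcing $f$ to increase before it eventually decays to $-\infty$. Monotonicity therefore genuinely relies on the coupling between $\delta$ and $\lambda^*$ imposed by the choice of $\delta$ in the theorem, and the crux of the proof is identifying and using $\lambda^*(1+\delta)\le 1/2$ at the point $c\to0^+$.
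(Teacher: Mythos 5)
Your proof is correct and follows essentially the same route as the paper's: the paper computes the stationary point $c^*=\vecnorm{\vecz}^2/(\n\power')-1/(2\power'\lambda^*)$ of $f$ and shows $c^*\le 0$ under the theorem's choice of $\delta$, which is algebraically the very same condition as your inequality $\lambda^*\vecnorm{\vecz}^2\le \n/2$ obtained by letting $c\to 0^+$. Both arguments hinge on the identical coupling $1+\delta<1/(2\lambda^*)$ enforced by \eqref{Eq:delta_interval} and \eqref{Eq:Def_delta_start_min}, which the paper likewise invokes in its appendix, and your closing observation that $\lambda^*<1/2$ alone would not suffice matches the paper's implicit reliance on that restriction.
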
\arxivonly{
		\begin{proof}
			See Appendix \ref{Append:Exp_f_c}.
		\end{proof}}
	Finally, we present a lemma {}{in which we bound} the number of multiplicity vectors {}{at} a given {}{$\ell_1$} distance {}{from} the transmitted multiplicity vector. {}{Furthermore, we provide} a lower bound on the {}{$\ell_2$} distance between any {}{decoded multiplicity vector and the transmitted multiplicity vector}.
	\begin{lemma}
		Fix a multiplicity vector $\vecn$ such that $\vecnorm{\vecn}_1=\Ka$, $|\Su{\vecn}|=\Ma$,  $\Su{\vecn}\subset[\M]$, and $\Ka \ge \Ma$. Fix $\smash{t\in [\Ka]}$. Let  $\hat{\vecn}$  denote a generic multiplicity vector such that $\vecnorm{\hat{\vecn}}_1=\Ka$, $|\Su{\hat{\vecn}}|=\Ma$, $\Su{\hat{\vecn}}\subset[\M]$.   {}{Furthermore}, let $\smash{\setS=\Su{\vecn}\cap \Su{\hat{\vecn}}^c}$, $\smash{|\setS|=\ell}$,~$\smash{\vecnorm{\vecn_{\setS}}_1\ge t}$,~$\smash{\widehat{\setS}=\Su{\vecn}^c\cap \Su{\hat{\vecn}}}$,~$\smash{|\hat{\setS}|=\ell}$, $\smash{\vecnorm{\hat{\vecn}_{\hat{\setS}}}_1=j}$,
		\begin{IEEEeqnarray}{c}
			\setN=\{m\in \Su{\vecn}\cap \Su{\hat{\vecn}}: n_m \ge \hat{n}_m\}
		\end{IEEEeqnarray}
		with $|\setN|=i$, and $\widehat{\setN}=\Su{\vecn}\cap \Su{\hat{\vecn}}\cap \setN^c$. Then, {}{for fixed $\ell$, $\setS$, $i$, $\setN$, and $j$,} the total number of {}{vectors} $\hat{\vecn}$ {}{satisfying} $\vecnorm{\vecn-\hat{\vecn}}_1=2t$ is at most $\M'$, where $\M'$ is defined in \eqref{Eq:M_prime}.
		{}{Furthermore}, such an $\hat{\vecn}$ satisfies $ \vecnorm{\vecn-\hat{\vecn}}^2 \ge c_{\min}$,
		with $c_{\min}$ defined in~\eqref{Eq:c_min_defn}.
		\label{Lem:Pi_Cardinality}
	\end{lemma}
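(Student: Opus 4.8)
The plan is to use the partition of the two supports induced by the blocks $\setS$, $\widehat{\setS}$, $\setN$, $\widehat{\setN}$, and then to count admissible multiplicity assignments and lower-bound the squared $\ell_2$ distance one block at a time. First I would record the disjoint decompositions $\Su{\vecn}=\setS\cup\setN\cup\widehat{\setN}$ and $\Su{\hat{\vecn}}=\widehat{\setS}\cup\setN\cup\widehat{\setN}$, so that fixing $\ell$, $\setS$, $i$, and $\setN$ also fixes $\widehat{\setN}$ and yields $|\widehat{\setN}|=\Ma-\ell-i$; on these blocks one has $\hat{n}_m=0$ on $\setS$, $n_m=0$ on $\widehat{\setS}$, $n_m\ge\hat{n}_m$ on $\setN$, and $\hat{n}_m>n_m$ on $\widehat{\setN}$. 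The choice of impostor positions $\widehat{\setS}\subseteq\Su{\vecn}^c$ is deferred to the factor $\M^\ell/\ell!$ in the rate~\eqref{Eq:Def_R}, so here I count only the multiplicity assignments for a fixed block structure.

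The central preliminary step is to split the constraint $\vecnorm{\vecn-\hat{\vecn}}_1=2t$ into positive and negative parts. Because $\vecnorm{\vecn}_1=\vecnorm{\hat{\vecn}}_1=\Ka$, the coordinates on which $\vecn$ dominates $\hat{\vecn}$ (the blocks $\setS$ and $\setN$) and those on which $\hat{\vecn}$ dominates $\vecn$ (the blocks $\widehat{\setS}$ and $\widehat{\setN}$) each contribute exactly $t$, giving the two balance identities
\begin{IEEEeqnarray}{c}
\vecnorm{\vecn_{\setS}}_1+\sum_{m\in\setN}(n_m-\hat{n}_m)=t,\qquad j+\sum_{m\in\widehat{\setN}}(\hat{n}_m-n_m)=t. \nonumber
\end{IEEEeqnarray}
These identities fix the total mass on each block and reduce the counting to a product of independent composition problems.

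For the cardinality bound I would then count block by block. Assigning positive multiplicities to the $\ell$ impostors of $\widehat{\setS}$ summing to $j$ is a composition of $j$ into $\ell$ positive parts, contributing $\binom{j-1}{\ell-1}$; assigning positive inflations on $\widehat{\setN}$ summing to $t-j$ is a composition of $t-j$ into $\Ma-\ell-i$ positive parts, contributing $\binom{(t-j-1)^+}{(\Ma-\ell-i-1)^+}$. The deflation block $\setN$ is the delicate one: the deflations $d_m=n_m-\hat{n}_m$ are non-negative, sum to $t-\vecnorm{\vecn_{\setS}}_1$, and satisfy $d_m\le n_m-1$ because $\hat{n}_m\ge1$. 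Dropping the upper bounds and applying a stars-and-bars count gives $\M_0^+$ in~\eqref{Eq:M_0_plus}; substituting $e_m=\hat{n}_m-1\ge0$, whose sum is $\vecnorm{\vecn_{\setS}}_1+\vecnorm{\vecn_{\setN}}_1-t-i$, and dropping the constraints on $e_m$ gives $\M^+$ in~\eqref{Eq:M_plus}. Since the true count is at most each of these, it is at most $\min\{\M_0^+,\M^+\}$, and multiplying the three block counts yields $\M'$ in~\eqref{Eq:M_prime}.

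For the distance bound I would apply, on each block separately, the power-mean inequality $\sum_k x_k^2\ge(\sum_k x_k)^2/K$ with $K$ the block size, using the block masses from the balance identities; since each block sum of squares is a non-negative integer, I can round up to the nearest integer. This gives $\funceil{\vecnorm{\vecn_{\setS}}_1^2/\ell}$ on $\setS$, $\funceil{(t-\vecnorm{\vecn_{\setS}}_1)^2/i}$ on $\setN$, $\funceil{(t-j)^2/(\Ma-\ell-i)}$ on $\widehat{\setN}$, and $\funceil{j^2/\ell}$ on $\widehat{\setS}$, the indicators in~\eqref{Eq:c_min_defn} accounting for empty blocks (note that $\widehat{\setN}$ is nonempty precisely when $t>j$). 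Summing reproduces $c_{\min}$. I expect the deflation count to be the main obstacle, both in recognizing that the two complementary stars-and-bars relaxations each upper-bound the bounded-composition count—so that the minimum is valid—and in carefully handling the degenerate cases ($\ell=0$, $i=0$, $t=j$, empty blocks) that the $(\cdot)^+$ operations and indicators encode.
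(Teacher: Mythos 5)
Your proposal is correct and follows essentially the same route as the paper's proof: the same four-block support decomposition, the same two balance identities (the paper's \eqref{Eq:Linear_a}--\eqref{Eq:Linear_b}) obtained from $\vecnorm{\vecn}_1=\vecnorm{\hat{\vecn}}_1$, the same per-block composition/stars-and-bars counts (with the impostor-position factor $\M^\ell/\ell!$ likewise deferred to the rate term, and the deflation block bounded by the minimum of the two complementary relaxations $\M_0^+$ and $\M^+$) whose product gives $\M'$, and the same blockwise $\ell_1$--$\ell_2$ inequality with integer rounding to obtain $c_{\min}$. No gaps; this is the paper's argument.
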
\arxivonly{
		\begin{proof}
			See Appendix \ref{Append:Cardinality_Pi}.
		\end{proof}}
	{The proof of the above lemma involves linearizing the equation $\vecnorm{\vecn-\hat{\vecn}}_1=2t$, and counting the integer valued solutions of the resulting partition equations to obtain $\M'$.}
    \subsubsection{Codebook Generation}
	  For $\smash{m\in [\M]}$ and $\smash{\power'<\power}$, {}{we} generate {}{the} codewords $\smash{\rveC_{\rm m}\sim \normal(\zv,\power'\matI_\n)}$ {}{independently}.  To send message $W_k$, {}{the} transmitter chooses {}{codeword} $\rveC_{W_k}$ and {}{set} $\smash{\rvecx_{k}=\indic{\|\rveC_{W_{ k}}\|^2 \le \n\power}\rveC_{W_{k}}}$.
    \subsubsection{Decoder} The decoder outputs {}{the multiplicity vector} $\hat{\bm{N}}\in \mathfrak{P}$  minimizing $\smash{\vecnorm{\rvecy-\rveC(\hN)}^2}$, where
    \begin{IEEEeqnarray}{c}
        \rveC(\hat{\bm{N}})= \sum_{ m\in \Su{\hat{\bm{N}}}}\hat{N}_{ m} \bm{C}_{ m}.
    \end{IEEEeqnarray}
 Note that the probability {}{that a tie occurs} is zero. 
 \par Next, we analyze the {}{average total variation between $\smash{\vect=\vecn/\Ka}$ and $\hat{\rvect}=\hat{\rvecN}/\Ka$ where the average is with respect to the noise vector $\rvecz$ in \eqref{Eq:Channel_Model}, and the random codebook.} 
    \subsubsection{Error Analysis} Similar to \cite{PolyanskiyISIT2017massive_random_access}, we {}{replace} the joint probability distribution over which $\Exop[\TV{\vect}{\hat{\bm{T}}}]$ is {}{computed with} the distribution for which $\rvecx_k\stackrel{D}{=}\rveC_{W_k}$. This can be done at the expense of adding an additional total variation {}{term} {}{(the $p_0$ term in} \eqref{Eq:Def_p_0}). {}{Next}, we  evaluate $\Exop[\TV{\vect}{\hat{\bm{T}}}]$ conditioned on the \textit{high probability event} $\smash{\setA=\big\{\vecnorm{\rvecz}^2  \le \n+\n\delta \big\}}$, {}{where $\delta$ is chosen as in}~\eqref{Eq:delta_interval}. {}{This yields the additional penalty $p_1$ defined in \eqref{Eq:Def_p_1}, which is obtained by using  the concentration bound for the sum of chi-squared random variables given in\mbox{\cite[Eq.~(2.19)]{wainwright2019high}}.}
	\arxivonly{From Lemma \ref{Lem:Delta}, we can express $\Exop\big[\TV{\vect}{\hT}\indic{\rvecz\in \setA}\big]$ as}\confonly{\par It can be seen that}
	\begin{IEEEeqnarray}{rCl}
		\Exop\big[\TV{\vect}{\hT}\indic{\rvecz\in \setA}\big]=\sum_{t=1}^{\Ka} \frac{t}{\Ka}p_t
		\label{Eq:TV_as_Exp_Delta_Eq_2t}
	\end{IEEEeqnarray}
	where $p_{t}=\Prob[ \vecnorm{\vecn-\hN}_1=2  t,\rvecz\in\setA]$. {}{Next}, we will upper-bound {}{$p_t$}.
	\par  Without loss of generality, we assume $\setM_{\rm a}=[\Ma]$. We define the set of all decoded message multiplicity vectors at an {}{$\ell_1$} distance of $2 t$ from $\vecn$ as
	\begin{IEEEeqnarray}{rCl}
		\hat{\Pi}_{ t}(\vecn)&=&\lefto\{\hat{\vecn}: \vecnorm{\vecn-\hat{\vecn}}_1=2 t\right\}.
	\end{IEEEeqnarray}
	{}{Set} $\rveC(\vecn)= \sum_{ m\in \Su{\vecn}}n_m \mathbf{C}_m$ , $c(\hat{\vecn}) = \vecnorm{\vecn-\hat{\vecn}}^2$, {}{and
			\begin{IEEEeqnarray}{c}
				\setE(\hat{\vecn})=\lefto\{\vecnorm{\rvecz+\sqrt{\power'c(\hat{\vecn})}\rvecx '}< \vecnorm{\mathbf{Z}}\right\}.
				\label{Eq:Def_E_hatn}
			\end{IEEEeqnarray}
		}
        We observe that
	\arxivonly{
		\begin{IEEEeqnarray}{rCl}
			p_{ t}&=&\mathbb{P}\lefto[\bigcup_{\hat{\vecn}\in \hat{\Pi}_{ t}(\vecn)}\lefto\{\text{Decoded vector is}~{\hat{\vecn}}\right\}\cap \lefto\{\rvecz \in \setA\right\}\right]\\
			&=&\Prob\lefto[\bigcup_{\hat{\vecn}\in \hat{\Pi}_{ t}(\vecn)} \lefto\{\vecnorm{\rvecy-\rveC(\hat{\vecn})}< \vecnorm{\rvecy-\rveC\lefto(\vecn\right)}\right\}\cap \lefto\{\rvecz \in \setA\right\}\right]\nonumber\\
			&&\\
\label{Eq:Y_eq_C_pl_Z}&=&\Prob\lefto[\bigcup_{\hat{\vecn}\in \hat{\Pi}_t(\vecn)} \lefto\{\vecnorm{\rvecz+\rveC(\vecn)-\rveC(\hat{\vecn})}< \vecnorm{\rvecz}\right\}\cap \lefto\{\rvecz \in \setA\right\}\right]\\
			&=&\Prob\lefto[\bigcup_{\hat{\vecn}\in \hat{\Pi}_t(\vecn)} \lefto\{\vecnorm{\rvecz+\sqrt{\power'\vecnorm{\vecn-\hat{\vecn}}^2}\rvecx'}< \vecnorm{\rvecz}\right\}\cap \lefto\{\rvecz \in \setA\right\}\right]\nonumber\\
			\label{Eq:C_ind_Z}&&\\
			&=&\Prob\lefto[\bigcup_{\hat{\vecn}\in \hat{\Pi}_t(\vecn)} \lefto\{\vecnorm{\rvecz+\sqrt{\power'c(\hat{\vecn})}\rvecx'}< \vecnorm{\rvecz}\right\}\cap \lefto\{\rvecz \in \setA\right\}\right]
			\label{Eq:p_t_Defn}\\
            &=&\Prob\lefto[\bigcup_{\hat{\vecn}\in \hat{\Pi}_t(\vecn)}\setE(\hat{\vecn}) \cap \lefto\{\rvecz \in \setA\right\}\right].\label{Eq:p_t_via_En}
		\end{IEEEeqnarray}
		where $\eqref{Eq:Y_eq_C_pl_Z}$  follows from the fact that $\rvecy=\rveC(\vecn)+\rvecz$, and~$\eqref{Eq:C_ind_Z}$ follows by noting that $\rveC(\vecn)-\rveC(\hat{\vecn}) \indep \rvecz$,
		\begin{IEEEeqnarray}{c}
			\eqD{\rveC(\vecn)-\rveC(\hat{\vecn})}{\sqrt{\power'\vecnorm{\vecn-\hat{\vecn}}^2}}\rvecx'
			\label{Eq:Distr_Eq_1}
		\end{IEEEeqnarray}
		where $\rvecx' \sim \mathcal{N}(\zv,\matidentity_\n)$, and $\rvecx' \indep \rvecz$.}\confonly{
		\begin{IEEEeqnarray}{rCl}
			p_{ t}&=&\mathbb{P}\lefto[\bigcup_{\hat{\vecn}\in \hat{\Pi}_{ t}(\vecn)}\lefto\{\text{Decoded  vector is}~{\hat{\vecn}}\right\}\cap \lefto\{\rvecz \in \setA\right\}\right] \nonumber\\
			\arxivonly{&&\\
			&=&\Prob\Bigg[\bigcup_{\hat{\vecn}\in \hat{\Pi}_{ t}(\vecn)} \Big\{\vecnorm{\rvecy-\rveC(\hat{\vecn})}< \vecnorm{\rvecy-\rveC\big(\vecn\big)}\Big\}\cap \setA\Bigg]\nonumber\\
			&&\\
			&=&\Prob\Bigg[\bigcup_{\hat{\vecn}\in \hat{\Pi}_t(\vecn)} \Big\{\vecnorm{\rvecz+\sqrt{\power'c(\hat{\vecn})}\rvecx'}< \vecnorm{\rvecz}\Big\}\cap \setA\Bigg]\label{Eq:p_t_Exprssn}\\}
			&=&\Prob\Bigg[\bigcup_{\hat{\vecn}\in \hat{\Pi}_t(\vecn)}\setE(\hat{\vecn}) \cap \left\{\rvecz \in \setA\right\}\Bigg].\label{Eq:p_t_via_En}
		\end{IEEEeqnarray}
		Here, {}{to obtain} \eqref{Eq:p_t_via_En}, {}{we used that} $\smash{\rveC(\vecn)-\rveC(\hat{\vecn}) \indep \rvecz}$, and $\eqD{\rveC(\vecn)-\rveC(\hat{\vecn})}{\sqrt{\power'\vecnorm{\vecn-\hat{\vecn}}^2}}\rvecx'$, where $\rvecx' \sim \mathcal{N}(\zv,\matidentity_\n)$, and $\rvecx' \indep \rvecz$.}  
	Now, fixing  $\vecz \in \setA$, we analyze the conditional event {}{$\smash{\Prob[\setE(\hat{\vecn}) |\rvecz=\vecz]}$}. By invoking {}{Chernoff's bound} 
	\begin{IEEEeqnarray}{c}
		\Prob\Big[\setE(\hat{\vecn})\Big\vert {}{\rvecz=\vecz} \Big] \le \Exop \left[\exp\left(E_1\left(\rvecx ',c(\hat{\vecn}),\vecz\right)\right)\right]
	\end{IEEEeqnarray}
	where $\textstyle{E_1\left(\rvecx ',c(\hat{\vecn}),\vecz\right)=\lambda\vecnorm{\vecz}^2-\lambda\vecnorm{\vecz+\sqrt{\power'c(\hat{\vecn})}\rvecx '}^2.}$
	We choose $\lambda=\lambda^*$, where $\lambda^*$ is defined in \eqref{Eq:lambda_star}.  \arxivonly{(We make this choice of $\lambda$ so that the function {}{$E_0'$} defined in \eqref{Eq:Def_E0_prime}, and which comes up in our final expression in \eqref{Eq:Final_bound}, is maximized.)}
	\par Next, utilizing the identity
	\begin{IEEEeqnarray}{c}
		\label{Eq:Identity}
		\Exop\lefto[e^{-\theta \vecnorm{\sqrt{\alpha}\rvecz+\vecv}^2}\right]=\frac{\exp\lefto(-\frac{\theta\vecnorm{\vecv}^2}{1+2\alpha\theta}\right)}{(1+2\alpha\theta)^{\frac{\n}{2}}}
	\end{IEEEeqnarray}
	{}{which holds}  for $\alpha>0$, $\mathbf{v}\in\mathbb{R}^\n$, and $2\alpha\theta>-1$, we obtain

	\begin{IEEEeqnarray}{c}
		\Prob\Big[\setE(\hat{\vecn})\Big\vert {}{\rvecz=\vecz} \Big] \le \exp(f(c(\hat{\vecn})))
	\end{IEEEeqnarray}
	where $f(\cdot)$ is defined  {}{in} \eqref{Eq:df_dc}. {Note that the condition $\smash{2\alpha\theta>-1}$ is ensured in {}{our} setup, since $\lambda^*>0$, and $\smash{c(\hat{\vecn})>0}$.}
	Now, invoking Lemma \ref{Lem:Pi_Cardinality}, we {}{conclude that}
	$\smash{c(\hat{\vecn}) \ge c_{\min}}$. {Furthermore,} since $\vecz \in \mathcal{A}$, {it follows} from Lemma \ref{Lem:Exp_f_c} {}{that} $\smash{f(c(\hat{\vecn}))\le f(c_{\min}).}$
	That is, for $\vecz \in \setA$, {}{we have}
	\begin{IEEEeqnarray}{rCl}
		\Prob\Big[\setE(\hat{\vecn})\Big\vert {}{\rvecz=\vecz} \Big]
		&\le& e^{\lambda^* f(c_{\min})}.
	\end{IEEEeqnarray}
	\par Now, we proceed {}{by} apply{}{ing}  Gallager's {}{$\rho$-trick}. We fix  $\ell$, $\setS$, $i$, $\setN$, {}{$\hat{\setS}$, $\hat{\setN}$, and~$j$} as in the statement of Lemma \ref{Lem:Pi_Cardinality}. For convenience, we set $\smash{\eta=(t, \ell, \setS, i, \setN, j)}$,  and denote by $\sum_{\eta}$ and {}{$\bigcup_{\eta}$} the multiple summations and unions over the indices, respectively. Let $\smash{c_1(\hat{\vecn}) =\vecnorm{\vecn_{\setS}}_1 + \vecnorm{\vecn_{\setN}}_1 - \vecnorm{\hat{\vecn}_{\setN}}_1 }$, $\smash{c_2(\hat{\vecn}) =  \vecnorm{\hat{\vecn}_{\hat{\setN}}}_1 - \vecnorm{\vecn_{\hat{\setN}}}_1 + j}$, and 
	\begin{IEEEeqnarray}{c}
	    \hat{\Pi}'(\vecn) = \Big\{\hat{\vecn} :
		c_1(\hat{\vecn})+ c_2(\hat{\vecn}) = 2t \Big\}
		\label{Eq:Pi_prime}
	\end{IEEEeqnarray}
			{}{and note that} $\smash{\bigcup_{\hat{\vecn}\in\hat{\Pi}_t(\vecn)}\setE(\hat{\vecn})=\bigcup_{\eta}\bigcup_{\hat{\vecn}\in\hat{\Pi}'(\vecn)}\setE(\hat{\vecn}).}$
	{\arxivonly{Fix $\smash{p(\vecz,\hat{\vecn},\setA)}=\indic{\vecz \in \setA } \Prob\lefto[\setE(\hat{\vecn}) \vert \vecz\right]$.} Then, Gallager's $\rho$-trick applied to \eqref{Eq:p_t_via_En} yield{}{s}}
		\begin{IEEEeqnarray}{rCl}
			\Prob\lefto[\bigcup_{\hat{\vecn}\in \Pi_t(\vecn)}\setE(\hat{\vecn})\cap \setA \Big\vert \vecz \right]
&\le&\sum_{\eta}\Prob\lefto[\bigcup_{\hat{\vecn}\in \hat{\Pi}'(\vecn)}\setE(\hat{\vecn})\cap \setA \Big\vert \vecz \right]\\
			&\le & \sum_{\eta}\lefto(\sum_{\hat{\vecn}\in \hat{\Pi}'(\vecn)}\Prob\lefto[\setE(\hat{\vecn})\cap \setA \Big\vert \vecz \right]\right)^\rho\nonumber\\
			&&\\
            \arxivonly{&=& \sum_{\eta}\lefto(\sum_{\hat{\vecn}\in \hat{\Pi}'(\vecn)}p(\vecz,\hat{\vecn},\setA)\right)^\rho\\
		&\le & \sum_{\eta} \lefto(\sum_{\hat{\vecn}\in \hat{\Pi}'(\vecn)} e^{f(c_{\min})} \right)^\rho \\}
			&\le & \sum_{\eta}|\hat{\Pi}'(\vecn)|^\rho e^{\rho f(c_{\min})}.\label{Eq:c_min_uniformity}
		\end{IEEEeqnarray}
	Here, \eqref{Eq:c_min_uniformity} follows because\confonly{~$\smash{\Prob[\setE(\hat{\vecn})\cap \setA \vert \vecz ]}$ equals $\smash{\indic{\vecz \in \setA  \Prob[\setE(\hat{\vecn}) \vert \vecz]}}$, and because} upon fixing~$\eta$, the parameter $c_{\min}$ is the same for all $\hat{\vecn}\in \hat{\Pi}'(\vecn)$. This, in turn, follows from the definition of $c_{\min}$ in \eqref{Eq:c_min_defn}, and the observation that, since $\vecnorm{\vecn}_1=\vecnorm{\hat{\vecn}}_1$, we have $\smash{c_1(\hat{\vecn})=c_2(\hat{\vecn})}=t$. {}{Using the definition of $f(\cdot)$} in~\eqref{Eq:df_dc}, we have
	\begin{IEEEeqnarray}{rCl}
		\Prob\Big[\bigcup_{\hat{\vecn} \in \hat{\Pi}'(\vecn)}\setE(\hat{\vecn})\cap \setA \Big\vert \vecz\Big] &\le&\lefto(   |\hat{\Pi}'(\vecn)| e^{ b \vecnorm{\vecz}^2-\n a}\right)^\rho
	\end{IEEEeqnarray}
	where $a$ and $b$ are defined as in \eqref{Eq:Def_a} and \eqref{Eq:Def_b}, respectively.
	\par Now, we take the expectation over $\rvecz$. We observe that, with our choice of  $\lambda^*$, we have that {}{$1-2b\rho>0$}. Then, using~\eqref{Eq:Identity}, \arxivonly{we conclude that}
	\begin{IEEEeqnarray}{rCl}
		\Prob\Big[\bigcup_{\hat{\vecn} \in \hat{\Pi}'(\vecn)}\setE(\hat{\vecn}) \Big] &\le&  |\hat{\Pi}'(\vecn)|^\rho e^{-\frac{\n\rho}{2}\log(1-2b \rho)-\n\rho a}.
	\end{IEEEeqnarray}
	From Lemma \ref{Lem:Pi_Cardinality}, it follows that $|\hat{\Pi}'\vecn)| \le \M'$, where $\M'$ is given in \eqref{Eq:M_prime}.  Using the definition of $\tilde{p}_t$, $R$, and $E_0$ in~\mbox{\eqref{Eq:Def_tilde_p}--\eqref{Eq:Def_E0}}, we conclude that $p_t \le  \tilde{p}_t.$
\end{proof}
\section{Numerical Results}
\par In this section, we compute the bound in \eqref{Eq:Final_bound} numerically and compare it with the performance of a {}{TUMA-}adapted version of the CCS-AMP algorithm, originally {}{proposed} in \cite{Amalladinne2020} for {}{the} \gls{UMA} scenario. {}{The original CCS-AMP scheme involves a divide-and-conquer strategy in which messages are split into smaller blocks, as well as the use of inner and outer encoders and decoders}~\cite{fengler2019sparcs}. In particular, the inner decoder reconstructs the transmitted signals, and the outer decoder maps these reconstructions to valid codewords. To extend CCS-AMP {}{to} TUMA, we modify the inner decoder to handle message collisions by incorporating a carefully chosen prior and {}{a} Bayesian estimation of multiplicities. In the outer decoder, we apply a majority-vote approach to estimate {}{the} message multiplicities. \arxivonly{With these adaptations, we are able to use CCS-AMP to estimate multiplicities in TUMA.}

\par {}{In the} numerical evaluation, we {}{set} the codeword length {}{to} $n = \num{38400}$, the number of message bits {}{to} $k = 128$. {}{To use Theorem \ref{Th:Main} we need to specify the message type. We choose it as follows.}  We select  {}{message} types that approximate a {}{Zipf} PMF $\smash{p_Z(m; \n_Z, s) = {m^{-s}}/{\sum_{j=1}^{\n_Z} j^{-s}}}$ for $\smash{m = 1, \hdots, \n_Z}$. To determine the transmitted message type, we use $p_Z(m; \Ma, 1)$. Specifically, we fix a $\Kap$ and calculate each $n_m$ by rounding $\Kap p_Z(m; \Ma, 1)$ so that the total number of users $\smash{\Ka = \sum_m n_m}$ is approximately $\Kap$. The resulting transmitted type is $\vect_{Z} = \vecn / \Ka$. {}{For reference, the UMA multiplicity vector is $\vect_U = [1 / \Kap, \hdots, 1 / \Kap]$.} {When evaluating the bound, we fix $\delta={1}/\lefto({1-2\delta_{\min}^*}\right)-1.01$, and we optimize over $\rho$, with $\rho_{\min}=0.01$.}

\par We evaluate the minimum energy per bit $\smash{E_{\rm b} / N_0 = 10 \log_{10}(\n \power / 2 k)}$ (dB) required to achieve a total variation distance of $\epsilon=0.05$. {In Fig. 1, we depict the value of  $E_{\rm{b}} / N_0$ (dB) versus the total variation distance between the UMA profile and the chosen TUMA profiles,} {}{for} $\Kap = 100$ and $\Ma$ varying from $2$ to $100$. {}{The point $\Ma=100$, which corresponds to the UMA setting, is obtained by using}\mbox{\cite[Theorem 1]{PolyanskiyISIT2017massive_random_access}}. {}{In our implementation of the CCS-AMP scheme for TUMA, we average the total variation distance over 1000 simulations, and we choose the number of potential candidates per sub-blocks to be~$300$.}

\par The trend in $E_{\rm b} / N_0$---which increases, peaks, and then decreases---can be {}{explained} as follows. Assume that the decoder has an estimate of the support of {}{the} transmitted messages. Then, it has to consider $\binom{\Ka-1}{\Ma-1}$ possible multiplicity vectors over that support. This number grows, peaks, and then declines as $\Ma$ decreases from $\Ka'$ to $1$, dictating a similar trend in the  error probability. To maintain a target~$\epsilon$, $E_b/N_0$ must follow the same pattern. Furthermore, as we deviate from the \gls{UMA} profile by {}{decreasing $\Ma$}, the power gain per transmitted message increases, eventually peaking at $\Ma = 1$, where it becomes proportional to ($\Kap)^2$. This power gain accounts} for the eventual  substantial improvement\footnote{Note that when the total variation between the UMA and TUMA profiles is $0.98$, the required $\EbNoi$ (not shown in the figure) is $-27.32$ dB for the bound, and $-22.46$ for CCS-AMP.} in performance, {}{compared to the UMA case when $\Ma$ decreases. We also observe that the the $\EbNoi$ required by CCS-AMP is about $3$dB above the limit predicted by the bound. 
\begin{figure}[htbp]
    \centering
    \begin{tikzpicture}[scale=0.5] 
        \begin{axis}[
            width=1.7\linewidth,
            height=8cm,
            grid=both,
            xlabel={total variation distance between the UMA and TUMA profiles},
            ylabel={$E_{\rm b} / N_0$ (dB)},
            xlabel style={font=\large}, 
            ylabel style={font=\large},
            legend style={at={(0.5,-0.2)}, anchor=north, legend columns=2},
            xmin=0, xmax=1,
            ymin=-9, ymax=5,
            xtick={0, 0.2, 0.4, 0.6, 0.8, 1},
            ytick={-8,-6, -4, -2, 0, 2, 4, 6}, 
            thick ]

        \addplot[blue, mark=o, ultra thick] coordinates {
            (0, 3.9955)
            (0.2, 3.7572)
            (0.4, 4.9479)
            (0.5071, 4.2453)
            (0.7, 3.7065)
            (0.8002, -6.89)
            (0.98, -22.4623)
        };

        \addplot[red, mark=square*, ultra thick] coordinates {
            (0, 0.0452)
            (0.2, 1.51)
            (0.4, 1.44)
            (0.5071, 1.4)
            (0.7, 1.3)
            (0.8002, -7.11)
            (0.98, -27.32)
        };

        \draw[->, ultra thick] (axis cs:0.4,-4.5) -- (axis cs:0.25,4.3) 
            node[pos=0, below, font=\large] {CCS-AMP};
        
        \draw[->, ultra thick] (axis cs:0.59,-6.15) -- (axis cs:0.45,1.2) 
            node[pos=0, below, font=\large] {Bound (Theorem 1)};

        \end{axis}
    \end{tikzpicture}
    \caption{The required $E_{\rm b}/N_0$ (dB) for message types with $(\Ma,\Ka)$ in $\smash{\{ (100,100), (80,92), (60, 98), (50,102), (30,102), (10,100)\}}$.}
    \label{Fig:EbN0_Bound_Algo}
\end{figure}
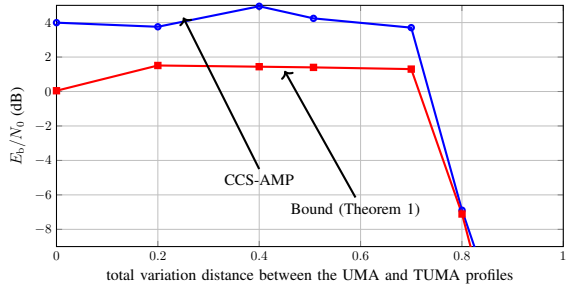

\section{Conclusion}
We derived a numerically computable {}{achievability} bound on the {}{error incurred in estimating the type of messages over an unsourced GMAC}. {By evaluating this bound, we obtained insights into the minimum energy per bit required to transmit a given message {}{type} under a total variation constraint.} Extending this work to {}{a} scenario with unknown user and message counts, as well as to fading models, are possible directions for future works.
\arxivonly{
\appendices

\section{Proof of Lemma \ref{Lem:Delta} }
\label{Append:Scaled_TV_Even}
Since $\TV{\vect}{\hT} \in [0,1]$, $\vecnorm{\vecn-\hat{\vecn}}_1 \in[0,2\Ka]$. We have,
\begin{IEEEeqnarray}{rCl}
	\vecnorm{\vecn-\hat{\vecn}}_1&=&\sum_{ m:n_m \ge \hat{n}_m}(n_m-\hat{n}_m)+\sum_{ m:\hat{n}_m > n_m}(\hat{n}_m-n_m).\nonumber\\
	&&
\end{IEEEeqnarray}
Since $\vecnorm{\vecn}_1=\vecnorm{\hat{\vecn}}_1$, we have
\begin{IEEEeqnarray}{rCl}
	\sum_{m: n_m \ge \hat{n}_m}n_m+\sum_{m: \hat{n}_m > n_m}n_m&=&\sum_{ m:n_m \ge \hat{n}_m}\hat{n}_m+\sum_{ m:\hat{n}_m > n_m}\hat{n}_m.\nonumber\\
	&&
\end{IEEEeqnarray}
Rearranging the terms in the above equation, we observe
\begin{IEEEeqnarray}{rCl}
	\sum_{ m:n_m \ge \hat{n}_m}(n_m-\hat{n}_m)&=& \sum_{ m:\hat{n}_m > n_m}(\hat{n}_m-n_m). \nonumber\\
	&&
\end{IEEEeqnarray}
Thus, we observe that $\vecnorm{\vecn-\hat{\vecn}}_1$ is even.
\par Now, we look at the range of values {}{$\vecnorm{\vecn-\hat{\vecn}}_1$} can assume. The upper bound on $\vecnorm{\vecn-\hat{\vecn}}_1$ is achieved by any pair of $\vecn$ and $\hat{\vecn}$ such that $\Su{\vecn}\cap \Su{\hat{\vecn}}=\emptyset$, as
\begin{IEEEeqnarray}{rCl}
	\sum_{ m:n_m \ge \hat{n}_m}(n_m-\hat{n}_m)&=& \sum_{ m:\hat{n}_m > n_m}(\hat{n}_m-n_m)=\Ka.\nonumber\\
	&&
\end{IEEEeqnarray}
Next, since we assume $\vecn \neq \hat{\vecn}$, $\vecnorm{\vecn-\hat{\vecn}}_1 >0$. Since $\vecnorm{\vecn-\hat{\vecn}}_1$ is even, we have $\vecnorm{\vecn-\hat{\vecn}}_1 \ge 2$. The following choices of $\vecn$ and $\hat{\vecn}$ achieves this lower bound. First, fix any $\vecn$ that satisfy the hypothesis of this lemma. Let $m_{\max}$ denote its index with the maximum multiplicity and $m_{\min}$  the index with minimum multiplicity. Choose an $\hat{\vecn}$ such that $\Su{\vecn}= \Su{\hat{\vecn}}$, $\hat{n}_{m_{\max}}=n_{m_{\max}}-1$, $\hat{n}_{m_{\min}}=n_{m_{\min}}+1$, and $\hat{n}_m=n_m$, for all other values of $m\in[\M]$. For such a choice, $\vecnorm{\vecn-\hat{\vecn}}_1=2$. Thus, we have shown that, for any $\vecn$ and $\hat{\vecn}$ satisfying the conditions of the lemma, $\vecnorm{\vecn-\hat{\vecn}}_1 $ is even and belongs to $ [2:2\Ka]$. Alternatively, $\vecnorm{\vecn-\hat{\vecn}}_1=2t$, for some $t\in[\Ka]$.
\arxivonly{
	\section{Proof of Lemma \ref{Lem:Opt_lambda}}
	\label{Sec:Opt_Chernoff}
	Let
	\begin{IEEEeqnarray}{c}
		a'=\frac{1}{2}\log(1+2c\power'\lambda)
	\end{IEEEeqnarray}
	\begin{IEEEeqnarray}{c}
		b'=\lambda\Big(1-\frac{1}{1+2c\power'\lambda}\Big)
	\end{IEEEeqnarray}
	so that
	\begin{IEEEeqnarray}{c}
		E_0'(\lambda)=\rho a'+\frac{1}{2}\log(1-2b'\rho).
	\end{IEEEeqnarray}
	To find the $\lambda$ that maximizes $E_0'$, we analyze the equation $\frac{dE_0'}{d\lambda}=0$ and find the  \textit{stationary point} of the function $E_0'$. That is, we analyze
	\begin{IEEEeqnarray}{c}
		\rho \frac{da'}{d\lambda}=\frac{\rho}{1-2b' \rho}\frac{db'}{d\lambda}.
		\label{Eq:dE0_dlambda}
	\end{IEEEeqnarray}
	We have
	\begin{IEEEeqnarray}{c}
		\frac{da'}{d\lambda}=\frac{c\power' }{1+2c\power'\lambda}
		\label{Eq:da_dlambda}
	\end{IEEEeqnarray}
	and
	\begin{IEEEeqnarray}{c}
		\frac{db'}{d\lambda}=\frac{4c^2\power'^2\lambda^2+4c\power'\lambda}{(1+2c\power'\lambda)^2}.
		\label{Eq:db_dlambda}
	\end{IEEEeqnarray}
	We also have
	\begin{IEEEeqnarray}{c}
		1-2b'\rho=\frac{1+2c\power'\lambda-4c\power'\lambda^2\rho}{1+2c\power'\lambda}.
		\label{Eq:One_min_2b_rho}
	\end{IEEEeqnarray}
	Substituting \eqref{Eq:da_dlambda}, \eqref{Eq:db_dlambda}, and $\eqref{Eq:One_min_2b_rho}$ in \eqref{Eq:dE0_dlambda}, we obtain that the stationary point satisfies
	\begin{IEEEeqnarray}{c}
		c\power'+2c^2\power'^2\lambda-4c^2\power'^2\lambda^2\rho=4c^2\power'^2\lambda^2+4c\power'\lambda.
	\end{IEEEeqnarray}
	Equivalently,
	\begin{IEEEeqnarray}{c}
		4c\power'(1+\rho)\lambda^2+2(2-c\power')\lambda-1=0.
	\end{IEEEeqnarray}
	Solving the quadratic equation, we obtain \eqref{Eq:lambda_star}. From the expression, using the fact that $\rho \in (0,1]$, we can easily observe that
	\begin{IEEEeqnarray}{c}
		\lambda^* \le \frac{1}{2(1+\rho)}< \frac{1}{2}.
	\end{IEEEeqnarray}
    The claim that $\lambda^*$ maximizes $E_0'(\lambda)$ follows  by observing that  $E_0'(\lambda)$ is twice differentiable for $\lambda>0$, and verifying that $\frac{dE_0'(\lambda)}{d\lambda}$ evaluated at $\lambda=\lambda^*$ is negative.
	 }
\section{Proof of Lemma \ref{Lem:Exp_f_c}}
\label{Append:Exp_f_c}
We observe that the first derivative of $f(c)$ with respect to $c$ is
\begin{IEEEeqnarray}{c}
	f'(c)=\frac{{\lambda^*} \power'}{(1+2c \power'{\lambda^*} )}\Big(\frac{2{\lambda^*} \vecnorm{\vecz}^2}{1+2c \power'{\lambda^*} }-\n\Big).
	\label{Eq:f_First_Derivative}
\end{IEEEeqnarray}
Further, $f'(c)=0$ is attained at
\begin{IEEEeqnarray}{c}
	c^*=\frac{\vecnorm{\vecz}^2}{\n\power'}-\frac{1}{2\power'{\lambda^*}}.
	\label{Eq:Opt_c}
\end{IEEEeqnarray}
For any $\vecz \le \n(1+\delta)$, i.e., $\vecz \in \setA$, we observe that
\begin{IEEEeqnarray}{c}
	c^* \le \frac{1+\delta}{\power'} -\frac{1}{2\power'{\lambda^*}}.
\end{IEEEeqnarray}
From Lemma \ref{Lem:Opt_lambda},  our choice of $\lambda^*$  satisfies $\lambda^*=1/2-\delta^*$, for some specific $\delta^*\in(0,1/2)$. Further, the  $\delta^*$ so obtained is a function of $c_{\min}$, which in turn depends on the parameters $t$, $\setS$, $\setN$, $\ell$, $i$, and $j$. We choose $\delta^*_{\min}$ as the minimum among all such  $\delta^*$ values. We have $\delta^*_{\min}\in(0,1/2)$ (as we are taking the minimum over a finite set of values).    We choose $\delta$ such that
\begin{IEEEeqnarray}{c}
	\delta < \frac{1}{1-2\delta^*_{\min}}-1.
    \label{Eq:c_star_UB}
\end{IEEEeqnarray}
This will ensure
\begin{IEEEeqnarray}{c}
	\frac{1+\delta}{\power'} -\frac{1}{2\power'{\lambda^*}} <  0.
    \label{Eq:delta_choice}
\end{IEEEeqnarray}
Using $\delta$ as in \eqref{Eq:delta_choice},  with $\vecz\in\setA$, and invoking \eqref{Eq:c_star_UB}, we obtain
\begin{IEEEeqnarray}{c}
	c^* \le 0.
\end{IEEEeqnarray}
It can be easily verified that, for $\vecz \in \setA$,
\begin{IEEEeqnarray}{c}
	c>c^* \Rightarrow \frac{2{\lambda^*} \vecnorm{\vecz}^2}{1+2c \power'{\lambda^*} }<\n.
\end{IEEEeqnarray}
Thus, from the expression for $f'(c)$ in \eqref{Eq:f_First_Derivative}, it follows that,
\begin{IEEEeqnarray}{c}
	c>c^* \Rightarrow f'(c)<0.
\end{IEEEeqnarray}
Since $c^*\le 0$ for $\vecz \in \setA$, $f(c)$ is strictly decreasing in $c$, for $c>0$.

\section{Proof of Lemma \ref{Lem:Pi_Cardinality}}
\label{Append:Cardinality_Pi}
To establish the bound stated in the first claim of the lemma, we need to enumerate all possible solutions $\hat{\vecn}$ of the nonlinear equation $\vecnorm{\vecn-\hat{\vecn}}_1=2t$. We accomplish this by first showing that $\vecnorm{\vecn-\hat{\vecn}}_1$ can be expressed as the sum of two terms which are linear in $\hat{\vecn}$, for every $\hat{\vecn}$ satisfying the assumptions stated in the lemma. Then, we obtain the bound by leveraging the resulting linear structure and estimating the total number of positive and nonnegative integer-valued solutions of an equation of the form~$\smash{x_1+\hdots+x_k=n}$, for~$\smash{k,~n\in\mathbb{N}}$ such that~$k \leq n$.
\par Fix~$\setS$,~$\setN$,~$\widehat{\setN}$, and~$\widehat{\setS}$ as in the statement of the lemma and denote $\setM_{\rm a}=\Su{\vecn}$,  and $\widehat{\setM}_{\rm a}=\Su{\hat{\vecn}}$.  Observe that 
\begin{IEEEeqnarray}{rCl}
 \vecnorm{\vecn-\hat{\vecn}}_1&=&\sum_{\m=1}^{\M}|n_m-\hat{n}_m|\\
 &=&\sum_{m \in \setM_{\rm a}}|n_m-\hat{n}_m|+\sum_{m \in \setM_{\rm a}^c}|n_m-\hat{n}_m|\\
 &=&\sum_{m \in \setS}|n_m-\hat{n}_m|+\sum_{m \in \setM_{\rm a}\cap \widehat{\setM}_{\rm a}}|n_m-\hat{n}_m|\nonumber\\
  &&+\sum_{m \in \widehat{\setS}}|n_m-\hat{n}_m|+\sum_{m \in \setM_{\rm a}^c\cap \widehat{\setM}_{\rm a}^c}|n_m-\hat{n}_m|\\
  &=&\sum_{m \in \setS}n_m+\sum_{m \in  \setM_{\rm a}\cap \widehat{\setM}_{\rm a}}|n_m-\hat{n}_m|+\sum_{m \in \widehat{\setS}}\hat{n}_m+0\nonumber \\
  &&\\
  &=&\vecnorm{\vecn_{\setS}}_1+\sum_{m \in  \setM_{\rm a}\cap \widehat{\setM}_{\rm a}}|n_m-\hat{n}_m|+j\\
   &=&\vecnorm{\vecn_{\setS}}_1+\sum_{m \in  \setM_{\rm a}\cap \widehat{\setM}_{\rm a}\cap \setN}|n_m-\hat{n}_m|+\nonumber\\
   &&+\sum_{m \in  \setM_{\rm a}\cap \widehat{\setM}_{\rm a}\cap \setN^c}|n_m-\hat{n}_m|+j\\
   &=&\vecnorm{\vecn_{\setS}}_1+\sum_{m \in  \setN}\lefto(n_m-\hat{n}_m\right)+\sum_{m \in   \widehat{\setN}}\lefto(n_m-\hat{n}_m\right)+j\nonumber\\
&=&\vecnorm{\vecn_{\setS}}_1+\vecnorm{\vecn_{\setN}}_1-\vecnorm{\hat{\vecn}_{\setN}}_1\nonumber\\
    &&+\vecnorm{\hat{\vecn}_{\widehat{\setN}}}_1-\vecnorm{{\vecn}_{\widehat{\setN}}}_1+j.
\end{IEEEeqnarray}
Next, since $\vecnorm{\vecn-\hat{\vecn}}_1=2t$, we conclude that 
\begin{IEEEeqnarray}{rCl}
\label{Eq:Linear_a}\vecnorm{\vecn_{\setS}}_1+\vecnorm{\vecn_{\setN}}_1-\vecnorm{\hat{\vecn}_{\setN}}_1&=&t \\
\label{Eq:Linear_b}\vecnorm{\hat{\vecn}_{\widehat{\setN}}}_1-\vecnorm{{\vecn}_{\widehat{\setN}}}_1+j&=&t.
\end{IEEEeqnarray}
Here, \eqref{Eq:Linear_a} is linear in $\hat{\vecn}_{\setN}$ and \eqref{Eq:Linear_b} is linear in $\hat{\vecn}_{\widehat{\setN}}$; the two equalities follow by noting that $\smash{\vecnorm{\vecn}_1=\vecnorm{\hat{\vecn}}_1}$,~and~$\smash{\vecnorm{\hat{\vecn}_{\widehat{\setS}}}_1=j}$.~In~\eqref{Eq:Linear_a},~$\setS$~corresponds to the set of messages misdetected at the receiver. Furthermore, $\setN$ represents the set of detected-but-deflated messages, as the estimated multiplicities corresponding to the messages in $\setN$ are deflated with respect to the transmitted multiplicities over $\setN$. Likewise, in~\eqref{Eq:Linear_b},~$\widehat{\setN}$~denotes the set of detected-but-inflated messages by the decoder, as the estimated multiplicities of messages in $\widehat{\setN}$ are strictly inflated compared to the corresponding transmitted multiplicities. Finally,~$\widehat{\setS}$~denotes the set of impostor messages, i.e., messages not transmitted but detected at the receiver. 
\par Next, fix $\setM_{\rm a}\subset [\Ma]$ such that $|\setM_{\rm a}|=\Ma$. Furthermore, fix $\smash{t,~\ell,~i}$ as in the statement of the lemma, fix subsets $\setS$,~$\setN$,~and~$\widehat{\setN}$~of~$\setM_{\rm a}$~such that~$\smash{\setM_{\rm a}=\setS \cup \widehat{\setS} \cup \widehat{\setN}}$, and fix~$\widehat{\setS}\subset \setM_{\rm a}^c\cap[\M]$ such that~$\smash{|\hat{\setS}|=\ell}$,~and~$\smash{\vecnorm{\vecn_{\setS}}_1\ge t}$. Then, every solution~$\hat{\vecn}_{\setN}^*$~ to~\eqref{Eq:Linear_a},~$\hat{\vecn}_{\widehat{\setN}}^*$~\eqref{Eq:Linear_b}, and~$\hat{\vecn}_{\widehat{\setS}}^*$~to~$\smash{\vecnorm{\hat{\vecn}_{\widehat{\setS}}}_1=j}$~define a solution~$\hat{\vecn}_{\setN\cup\widehat{\setN}\cup\widehat{\setS}}^*$~to~$\smash{\vecnorm{\vecn-\hat{\vecn}}_1=2t}$. Accordingly, fixing $t$, $\ell$, $i$, and $j$ as mentioned, $\smash{\setS\subseteq\setM_{\rm a}}$~such that $\smash{|\setS|=\ell}$,~$\smash{\vecnorm{\vecn_{\setS}}_1\ge t}$~$\smash{\setN\subseteq\setM_{\rm a}\cap\setS^c}$ such that $\smash{|\setN|=i}$,~and~$\smash{\widehat{\setN}=\setM_{\rm a}\setminus(\setS \cup \setN)}$. Then
\begin{IEEEeqnarray}{rCl}
\bigcup_{\widehat{\setS}}\bigcup_{\hat{\vecn}_{\widehat{S}}^*}\bigcup_{\hat{\vecn}_{\setN }^*}\bigcup_{\hat{\vecn}_{\widehat{\setN}}^*}\lefto\{\hat{\vecn}^*_{\setN\cup\widehat{\setN}\cup\widehat{\setS}}:\vecnorm{\hat{\vecn}_{\widehat{\setS}}}_1=j,~\text{\eqref{Eq:Linear_a}, and \eqref{Eq:Linear_b} holds}\right\}\nonumber\\
&&\label{Eq:SolutionSet}
\end{IEEEeqnarray}
where $\widehat{\setS}$ is such that $|\widehat{\setS}|=\ell$, contains the set of all solutions to $\smash{\vecnorm{\vecn-\hat{\vecn}}_1=2t}$. In addition, if  $\setN$ forms a deflated set for any $\hat{\vecn}$ of this set (with respect of $\vecn$), i.e., if $\smash{\setN=\{m:n_m \ge \hat{n}_m\}}$,~\eqref{Eq:SolutionSet} represents the set of all solutions to $\smash{\vecnorm{\vecn-\hat{\vecn}}_1=2t}$. 
\par We observe that there are $\binom{\M-\Ma}{\ell}$ ways of choosing an impostor set $\widehat{\setS}$ of length $\ell$ in \eqref{Eq:SolutionSet}, and that
\begin{IEEEeqnarray}{c}
	\binom{\M-\Ma}{\ell} \le \frac{\M^\ell}{\ell!}.
    \label{Eq:Impostor}
\end{IEEEeqnarray}
For every such $\widehat{\setS}$, there are 
\begin{IEEEeqnarray}{c}
    \binom{j-1}{\ell-1}
    \label{Eq:Impost_n}
\end{IEEEeqnarray}
possible $\hat{\vecn}_{\widehat{\setS}}\in\mathbb{N}^{\ell}$ satisfying $\vecnorm{\hat{\vecn}_{\widehat{\setS}}}_1=j$.
\par Next, note that there are $\M^+$ possible $\hat{\vecn}_{\setN}\in \mathbb{N}^i $ satisfying~\eqref{Eq:Linear_a}, where $\M^+$ is defined in \eqref{Eq:M_plus}. In addition, there are  $\M_0^+$ possible~$\vecn_{\setN}-\hat{\vecn}_{\setN}\in\mathbb{N}_0^i$~that satisfies the equation~$\vecnorm{\vecn_{\setN}-\hat{\vecn}_{\setN}}_1=t-\vecnorm{\vecn_{\setS}}_1$,~where $M_0^+$ is defined in \eqref{Eq:M_0_plus}. Hence, there are at most $\min\lefto\{\M_0^+,\M_0\right\}$ valid options for the deflated multiplicities in any solution $\hat{\vecn}$ to $\vecnorm{\vecn-\hat{\vecn}}_1=2t$.
\par Finally, there are
\begin{IEEEeqnarray}{c}
 {t-j-1 \choose \Ma-\ell-i-1 }   
 \label{Eq:InflSol}
\end{IEEEeqnarray}
possible $\hat{\vecn}_{\hat{\setN}}\in \mathbb{N}^{\Ma-\ell-i}$ satisfying \eqref{Eq:Linear_b}. Combining equations~\eqref{Eq:Impostor}--\eqref{Eq:InflSol},~we obtain $\M'$ in \eqref{Eq:M_prime} as the number of solutions of $\hat{\vecn}$ to~$\smash{\vecnorm{\vecn-\hat{\vecn}}_1=2t}$.  
	\par Next, to prove the second part of the lemma, we utilize the inequality
	\begin{IEEEeqnarray}{c}
		\vecnorm{\vecx}_1 \le \sqrt{n}\vecnorm{\vecx}_2,~\vecx \in \reals^n.
        \label{Eq:ell1_ell2_bound}
	\end{IEEEeqnarray}
	We obtain $\vecnorm{\vecn-\hat{\vecn}}^2 \ge c_{\min}$, where $c_{\min}$ is given in \eqref{Eq:c_min_defn},
	by applying \eqref{Eq:ell1_ell2_bound} to $\vecn_{\setS}$, $\vecn_{\setN}-\hat{\vecn}_{\setN}$, $\hat{\vecn}_{\hat{\setN}}-\vecn_{\hat{\setN}}$, and $\hat{\vecn}_{\hat{\setS}}$, separately.
}}

\bibliographystyle{IEEEtran}
\bibliography{IEEEabrv,./references}

\end{document}